\documentclass[11pt]{article}
\usepackage[margin=1.0in]{geometry}
\usepackage{amsmath}
\usepackage{xfrac}
\usepackage{graphicx}
\usepackage{units}
\usepackage{verbatim}
\usepackage{listings}
\usepackage{algorithm}
\usepackage{color}
\usepackage{algorithmicx}
\usepackage[noend]{algpseudocode}
\usepackage{comment}
\usepackage{verbatimbox}
\usepackage{booktabs}
\usepackage{url}

\usepackage{amssymb}

\usepackage[parfill]{parskip}

\usepackage{fontspec}
\usepackage{fancyvrb}
\DefineVerbatimEnvironment{leancode}{Verbatim}
  {fontsize=\scriptsize, numbers=left, numbersep=6pt}
\usepackage{listings}
\let\origref\ref
\def\ref#1{\textbf{\origref{#1}}}

\usepackage[inkscapelatex=false]{svg}
\usepackage[backend=bibtex]{biblatex}

\usepackage{quotes}

\everymath{\displaystyle}
\usepackage{amsthm}
\theoremstyle{definition}
\newtheorem{thm}{Theorem}[section]
\newtheorem{lemma}{Lemma}[section]

\usepackage{unicode-math}

\newcommand{\N}{\mathbb{N}}
\newcommand{\Q}{\mathbb{Q}}

\newcommand{\Z}{\mathbb{Z}}

\usepackage{authblk}

\title{Machine-Verifying Toom-Cook Multiplication with Integer Evaluation Points}
\author{Srihari Nanniyur}
\author{Siddhartha Jayanti}
\affil{Dartmouth College}
\date{}

\begin{document}
\maketitle

\begin{abstract}
We present a machine-verified proof of the correctness of Toom-Cook multiplication with generalized integer evaluation points. Toom-Cook is a class of fast multiplication algorithms parameterized by a triple $(k_x, k_y, \vec v)$ consisting of two positive integer \emph{split sizes} $k_x, k_y$ and a vector $\vec v$ of distinct \emph{evaluation points}. As part of our proof, we verify that for any selection of $k_x+k_y-1$ distinct integer evaluation points, we can obtain a formula for a base case threshold $\theta(k_x, k_y, \vec v)$ sufficient to ensure termination regardless of the values of the operands. The threshold formula, which we derive by obtaining upper bounds on the subproblem sizes produced by the Toom-Cook recurrence, does not depend on the operands; it depends only on $k_x$, $k_y$, $\vec v$, and the base $b$ in which we operate.

We write the proof in Lean 4, making use of the Mathlib library. We formalize the algorithm, our base case threshold formula, and our key lemma statements in Lean. We then use the AI theorem prover Aristotle to assist in completing the machine verification of the algorithm's correctness. This proof, through its synthesis of human input and AI assistance, demonstrates the continuing capability of AI to automate the machine verification process.
\end{abstract}

\section{Introduction}

Large integer multiplication has applications across mathematics, computing, and physics.
While the standard \emph{long multiplication} (a.k.a. \emph{grade-school multiplication}) algorithm takes $O(n^2)$ time to multiply two operands of $n$ digits each, there are several \emph{fast multiplication} algorithms that achieve an asymptotic speed-up, including Karatsuba ($O(n^{\log_2 3})$) \cite{karatsuba-ofman}, Toom-Cook ($O(n^{1+\varepsilon})$) \cite{toom, cook}, Schönhage-Strassen ($O(n \log n \log\log n)$) \cite{schonhage-strassen}, and Harvey-van der Hoeven ($O(n \log n)$) \cite{h-vdh-21}.

Toom-Cook, the focus of this paper, is actually a \emph{class} of algorithms parameterized by a triple $(k_x, k_y, \vec v)$ consisting of two positive integer \emph{split sizes} $k_x, k_y$ and a vector $\vec v$ of distinct \emph{evaluation points}.
In particular, to multiply $x$ and $y$, the algorithm uses a divide-and-conquer approach that splits $x$ into $k_x$ pieces, splits $y$ into $k_y$ pieces, and constructs two polynomials $p$ and $q$ whose coefficients are the pieces of $x$ and $y$ respectively.
It evaluates the polynomials at each of the points in the vector $\vec v = \langle v_1,\cdots, v_{k_x+k_y-1} \rangle$, computes pointwise products to determine the pointwise evaluation of $p \cdot q$, and uses polynomial interpolation to compute the final result.

In the common case where $k_x = k_y = k$, Toom-Cook achieves a time complexity of $O(n^{\log_k (2k-1)})$; the exponent can get arbitrarily close to 1 at the expense of a fast-growing constant term hidden by the big-$O$ \cite{crandall-pomerance}.
Toom-3, i.e., Toom-Cook where $k_x=k_y=3$, is widely used for fast multiplication of mid-to-large sized integers \cite{gmp}.
Proofs of correctness have been written for subclasses of the Toom-Cook family: for example, with $\vec{v} = \langle 0, 1,\cdots,2k-2 \rangle$ \cite{cook}. 
Part of such a proof is the derivation of a \emph{base case threshold}, $\theta$, which specifies the instance size---i.e., number of digits---at which to switch from the recursive case to the base case.

In this work, we derive a function $\theta(k_x, k_y, \vec v)$ that computes base case thresholds for arbitrary split sizes $k_x, k_y \geq 2$ and point vector $\vec v \in \Z^{k_x + k_y - 1}$, where the values in $\vec v$ are distinct.
We thereby---to our knowledge---provide the first machine-verified formalization of Toom-Cook that accounts for generalized integer evaluation points. (There are some other aspects that we do not generalize; see Section \ref{sec:algorithm}.) 
We then machine-verify an abstract mathematical specification of the algorithm in Lean.
Our proof combines manual derivation and analysis with the use of AI tools, including \emph{Aristotle} \cite{achim} for proving lemma goals in Lean and GPT-5 to aid in producing formalization strategies.

\subsection{Multiplication algorithms}
The standard \textit{long multiplication} (a.k.a. \textit{grade-school multiplication}) algorithm for integers has been known since antiquity.
The quadratic running time of this algorithm was conjectured to be asymptotically optimal by Kolmogorov \cite{karatsuba-retrospective}.
In 1962, Karatsuba \cite{karatsuba-ofman} refuted this conjecture by devising a divide-and-conquer algorithm that reduces multiplication of two $n$-digit operands to three multiplications with operands that have roughly $\sfrac{n}{2}$ digits, thus achieving a running time of $O\left(n^{\log_2 3}\right)$.

Toom \cite{toom} and Cook \cite{cook} vastly generalized Karatsuba's approach, showing that by dividing each problem into more subproblems of smaller size, multiplication can be performed in $O(n^{1+\varepsilon})$ time for arbitrarily small $\varepsilon > 0$.
The Toom-Cook class of algorithms is the primary subject of this paper; we detail it in Section \ref{sec:algorithm}.

Sch\"{o}nhage and Strassen showed that multiplication can be performed in $O(n \log n \log \log n)$ time using the Fast Fourier Transform (FFT) \cite{schonhage-strassen}.
They conjectured that $O(n \log n)$ would be achievable and optimal.
A recent series of papers \cite{furer, h-vdh-lecerf-poly, h-vdh-lecerf-2014, h-vdh-lecerf-2018} in this domain made steady progress on this conjecture.
This line of work culminated in the celebrated algorithm of Harvey and van der Hoeven \cite{h-vdh-21}, which uses more complicated FFT-based techniques to achieve $O(n \log n)$ time complexity. 
However, this algorithm is galactic, i.e., it is only efficient at problem sizes too large for any practical context \cite{h-vdh-21}. 
The question of asymptotic optimality remains open.
In fact, we are not aware of any non-trivial unconditional general lower bounds on the asymptotic time complexity of integer multiplication.

\subsection{Lean, Aristotle, and AI-assisted theorem proving} \label{sec:lean}

Our Toom-Cook proof is specified in Lean, a language for formalizing and machine-verifying mathematical proofs. Given a Lean formalization of a proof, the Lean kernel is able to check that proof's correctness. Lean provides a large number of \textit{tactics}, or commands specialized for solving particular types of proof goals. We also make extensive use of Mathlib, a library of mathematical tools and lemmas implemented in Lean.

Aristotle, an AI tool developed by Harmonic, is capable of generating valid Lean proofs by itself. Given a piece of Lean code containing unfilled lemma headers, Aristotle will attempt to fill in the proofs of those lemmas, making no changes to the rest of the code. Harmonic's October 2025 white paper on Aristotle \cite{achim} provides more information on the tool's internals. A significant portion of our proof code is Aristotle-generated. In accordance with the spirit of this project, we also made judicious use of GPT-5 to find Mathlib lemmas and help structure our code.

\subsection{Related work}
To our knowledge, we are the first to machine-verify Toom-Cook at the level of generality treated here.
We are also not aware of previous work that derives an explicit base case threshold formula for generalized $k_x$, $k_y$, and $\vec v$ (where $\vec v \in \mathbb{Z}^{k_x+k_y-1}$ and the points in $\vec v$ are distinct).
In this section, we outline other previous work on the Toom-Cook algorithm, including on optimizing the algorithm via split construction and evaluation point selection.

Cook \cite{cook}, building on the work of Toom \cite{toom}, provided the first rigorous specification of a subset of the Toom-Cook class of algorithms. Our own reference point is the description given in Brent and Zimmermann's \textit{Modern Computer Arithmetic} \cite{brent-zimm}. Zanoni \cite{zanoni} provides an analysis of the case where $k_x \neq k_y$, in particular of Toom--2.5 (that is, with splitting parameters 3 and 2). Zanoni designs an iterative method to efficiently apply Toom-2.5 to operands heavily unbalanced in size.

Bodrato \cite{bodrato}, both on his own and in collaboration with Zanoni \cite{bodrato-zanoni}, works towards developing heuristics for optimizing evaluation-point selection. Bodrato and Zanoni \cite{bodrato-zanoni} provide strategies for finding the most efficient sequences of operations to invert the matrices generated by these evaluation points. Larasati, Awaludin, Ji, and Kim \cite{larasati}, building on Bodrato's work, give a quantum circuit for Toom--3 multiplication. They minimize the number of division operations performed in the interpolation step, which lowers the Toffoli count of the circuit.

For the purposes of machine-verifying an abstract mathematical specification of the algorithm, we allow the evaluation points to be arbitrary fixed finite integers.
However, recent literature on Toom-Cook has not focused on the theoretical implications of this case, perhaps because selecting points of small magnitude works in practice.

\subsection{Challenges and our approach}

While the Toom-Cook family of algorithms is well understood, expositions on the general case have omitted the derivation of a base case threshold sufficient to ensure termination for generalized $\vec v$.
In the case of generalized $\vec v$, this value cannot be chosen as an absolute constant.
This is because the threshold required to ensure termination actually depends on the splitting parameters ($k_x$, $k_y$) and evaluation points ($\vec v$).
Large points lead to large pointwise evaluations, thus requiring a more careful analysis of the progression of subproblem sizes.

Writing machine-verifiable proofs is time-consuming, since it requires spelling out low-level details.
In the case of recursive algorithms like Toom-Cook, one of the most time-consuming steps in machine verification is proving termination.
The challenge lies in stating explicit bounds on the sizes of problems and resultant subproblems, and also in showing that the sizes are actually strictly decreasing between successive recursion depths above the base case.
We adopt an AI-driven approach to this task, taking advantage of the existence of an AI proof assistant that can push user-specified lemma goals through to completion.

Our first main contribution is the derivation of an explicit base case threshold formula $\theta(k_x, k_y, \vec v)$, such that Toom-Cook-($k_x$, $k_y$, $\vec v$) can use the recursive case until the \emph{problem size}---the digit count of the operand with larger absolute value---no longer exceeds $\theta$. The algorithm switches to the base case when the problem size is $\leq \theta(k_x, k_y, \vec v)$. (Note that the base $b$ in which we work is also implicit in the formula; all operations and representations in the algorithm take place in base $b$. Thus, we do not explicitly notate $b$ in the parameter list.)

The derivation of $\theta$ requires an explicit formulation of the problem size as a function of the digit counts of all the components that contribute to it.
We separate out those components that depend on the operands from those that do not; the result is a clean, immediately usable formula for a valid base-case threshold in terms of those components that do \textit{not} depend on the operands.

\subsection{Our contributions}
We make the following contributions in this paper.
\begin{itemize}
    \item \textbf{Base-case threshold formula for arbitrary distinct integer evaluation points.} 
    We derive a base case threshold formula $\theta(k_x, k_y, \vec{v})$ for Toom-Cook-$(k_x, k_y, \vec{v})$.
    Notably, the value of this threshold depends only on these three parameters and on the fixed base $b$ in which we operate; it does not depend on the operands $x$ and $y$.

    \item \textbf{Machine verification of Toom-Cook.} 
    We formulate a machine-verified proof, specified in Lean, of the class of Toom-Cook algorithms that use integer evaluation points, including cases where $k_x \neq k_y$.
    Notably, we make use of the AI theorem prover Aristotle to aid the proof process.
\end{itemize}

\section{Toom-Cook multiplication} \label{sec:algorithm}

We give a brief overview of the algorithm, building on the descriptions given by Brent and Zimmermann in \textit{Modern Computer Arithmetic} \cite{brent-zimm} and by Zanoni in his paper on iterative Toom-Cook methods \cite{zanoni}.

We work in a fixed base $b \in \mathbb{N}$ and assume that $b \geq 2$. All digit counts taken from this point forward are in base $b$. For the remainder of this paper, we work entirely in base $b$.

Let $\delta(a)$ denote the base-$b$ digit count of an operand $a \in \Z$. The operands to our multiplication are $x, y \in \mathbb{N}$. We define splitting parameters $k_x, k_y \in \mathbb{N}$, where $k_x \geq 2$ and $k_y \geq 2$. We define $P_{x, y} = \delta(\max\{x, y\})$ as the \textit{problem size}; or, to be more precise, the digit count in base $b$ of the operand with the greater magnitude. (For the extension to integer operands, we define $P_{x, y} = \delta(\max\{|x|, |y|\})$ for $x, y \in \Z$.) Finally, let $\vec v \in \mathbb{Z}^{k_x+k_y-1}$ be the vector of $k_x + k_y -1$ distinct integer evaluation points.

Let $B = b^i$ for some $i \in \mathbb{N}$ chosen such that $x < B^{k_x}$ and $y < B^{k_y}$. It is important to choose $i$ and $B$ such that the relative sizes of the chunks resulting from splitting an operand are kept as balanced as possible.

Our specification uses the efficient choice $i = \max\{\lceil \delta(x)/k_x \rceil\ , \lceil \delta(y) / k_y \rceil\}$. We do not further generalize the selection of $i$ and $B$, as our aim is to show that there exists at least one such selection for which we can derive a base case threshold satisfying termination and correctness for generalized $\vec v$. Our threshold derivation (Section \ref{sec:derivation}) is carried out with respect to this choice of $i$, although it is also correct for any $i$ that produces chunks of digit length at most $\lceil P_{x, y} / \min\{k_x, k_y\} \rceil$ in base $b$. 
Notably, Cook's thesis \cite{cook} specifies that $i$ (which he terms $q$) should satisfy a similar chunk-size bound.

We define the polynomials $p, q \in \Z[u]$ where $x = p(B)$ and $y = q(B)$:
\[
p(u) = \sum_{m=0}^{k_x-1} (\lfloor (x / B^m) \rfloor  \bmod B)u^m
\]
\[
q(u) = \sum_{m=0}^{k_y-1} (\lfloor (y / B^m) \rfloor \bmod B)u^m
\]

For each entry $v_j$ of $\vec v$, we recursively multiply $p(v_j)q(v_j)$ for sufficiently large $P_{x, y}$ and store the evaluated results in the vector $\vec w \in \mathbb{Z}^{k_x+k_y-1}$.

We take the Vandermonde matrix $V$ of $\vec v$. Since the values in $\vec v$ are distinct, this Vandermonde matrix will always be invertible provided the inversion and interpolation are performed over $\Q$. We multiply $V^{-1}\vec w$ to get the vector $\vec r$, which consists of the $k_x + k_y - 1$ coefficients of $p \cdot q$.

Let $h$ be the polynomial of degree at most $k_x+k_y-2$ whose coefficients are the values contained in $\vec r$. The construction of the Vandermonde interpolation step ensures that $h = p \cdot q$. We evaluate $h(B) = (p \cdot q)(B) = p(B) \cdot q(B) = x \cdot y$ to get the product of $x$ and $y$.

The procedure as specified above is for operands in $\mathbb{N}$. However, it is easy to extend it to operands in $\mathbb{Z}$ by multiplying those operands' absolute values and then determining the result's sign based on their original signs. This is the approach we have taken.

\paragraph{The infinity point.}
It is worth noting that in practice, a slightly expanded definition of $\vec v$ is often used, where $\infty$ is a point in $\vec v$. For the meaning of the evaluation at the infinity point, see Section \ref{sec:infty}. Our machine proof does \emph{not} account for the infinity point; but as we will see, our threshold derivation can be adapted to it easily enough.

\section{Base case threshold formula}
In practical implementations of Toom-Cook, the evaluation points are small enough (and the base-case cutoff high enough) that termination is not in question. However, if $\vec v$ is allowed to contain arbitrarily large integer points, then it becomes nontrivial to derive a closed form for a base case threshold sufficient to always ensure termination.

\subsection{Derivation} \label{sec:derivation}
We work entirely in base $b \in \N$, where $b \geq 2$. Recall that $\delta(a)$ denotes the number of digits (in base $b$) of an integer $a$. Recall as well that we defined $P_{x, y} = \delta(\max\{|x|, |y|\})$; that is, $P_{x, y}$ is the initial "problem size" with respect to the two operands  $x, y$. We will use $P$ as shorthand for $P_{x, y}$ from here onwards.

Given a pair of operands $x, y \in \mathbb{Z}$, it suffices to demonstrate this derivation for $|x|, |y| \in \mathbb{N}$, as we can multiply the absolute values of the operands and then determine the result's sign afterwards.

Let $v_{\max}$ be the absolute value (not the raw value) of the entry with the highest magnitude in $\vec v$. That is, we have that

\[
    v_{\max} = \max_j |v_j|
\]

Take operand $x$, and consider the polynomial $p \in \Z[u]$ obtained by splitting $|x|$ $k_x$ ways. We aim to separate out whatever depends on $x$ from whatever does not depend on it. A natural way to proceed with this is to define an expression for the coefficient of maximum magnitude.
\[
    p_{\max} = \max_\ell p_\ell
\]

Since the split is performed on $\{|x|, |y|\}$, the polynomials' coefficients are all non-negative. Using the monotonicity of $\delta$ over $\N$, we can state that

\begin{equation*}
\begin{aligned}
    \delta\left( p(v_j) \right) &= \delta\left( p_{(k_x - 1)} v_j^{(k_x - 1)} + \cdots + p_0 v_j^0 \right)\\
    &= \delta \left( |p_{(k_x - 1)} v_j^{(k_x - 1)} + \cdots + p_0 v_j^0 | \right) \quad \text{as $\forall s \in \Z$, $\delta(|s|) = \delta(s)$}\\
    &\leq \delta\left( |p_{(k_x - 1)} v_j^{(k_x - 1)}| + \cdots + |p_0 v_j^0| \right) \quad \text{by the triangle inequality}\\
    &= \delta\left( p_{(k_x - 1)} |v_j^{(k_x - 1)}| + \cdots + p_0 |v_j^0| \right) \quad \text{by non-negativity of coeffs.}\\
    &\leq \delta\left( p_{(k_x - 1)} v_{\max}^{(k_x - 1)} + \cdots + p_0 v_{\max}^0 \right)\\
    &\leq \delta \left( \sum_{m = 0}^{k_x - 1} p_{\max} v_{\max}^m \right)\\
    &= \delta\left( p_{\max} \sum_{m = 0}^{k_x - 1} v_{\max}^m \right)
\end{aligned}
\end{equation*}

and similarly for $q$, if one defines $q_{\max}$ in analogous fashion. Now observe that $\forall s, t \in \N$, $\delta(st) \leq \delta(s) + \delta(t)$. Thus, if we slightly loosen the upper bounds, we can re-express them in terms of digit counts: 
\begin{align*}
& \delta(p(v_j)) \leq \delta(p_{\max}) + \delta \left( \sum_{m = 0}^{k_x - 1} v_{\max}^m \right)\\
&\delta(q(v_j)) \leq \delta(q_{\max}) + \delta \left( \sum_{m = 0}^{k_y - 1} v_{\max}^m \right).
\end{align*}

Let
\[
    C_x = \delta \left( \sum_{m = 0}^{k_x - 1} v_{\max}^m \right)
\]
and
\[
    C_y = \delta \left( \sum_{m = 0}^{k_y - 1} v_{\max}^m \right).
\]

Under the splitting scheme selected earlier, $i = \max\{\lceil \delta(x)/k_x \rceil\ , \lceil \delta(y) / k_y \rceil\}$. Since coefficients are strictly less than $B = b^i$, we see that $\delta(p_{\max}), \delta(q_{\max}) \leq i \leq \lceil P / k_{\min} \rceil$, where $k_{\min} = \min\{k_x, k_y\}$. Thus, if we set $C = \max\{C_x, C_y\} + 1$ (the 1 accounting for the slack from the ceiling function), then we may calculate an upper bound on any given subproblem size $P'$ in terms of $P$. Specifically, let
\[
    P' = \delta\left( \max\{ |p(v_j)|, |q(v_j)| \} \right).
\]
Then
\begin{align*}
    P' &\leq i + \max\{C_x, C_y\}\\
    &\leq \lceil P / k_{\min} \rceil + \max\{C_x, C_y\}\\
    &\leq P/k_{\min} + C \quad \text{as $C = \max\{C_x, C_y\} + 1$}.
\end{align*}

Now, we solve for a value $\theta$ satisfying
\[
    \theta = \theta/k_{\min} + C
\]
from which one sees that
\[
    \theta\left( 1 - 1/k_{\min} \right) = C
\]
and
\begin{align*}
    \theta &= \frac{C}{1 - 1/k_{\min}}\\
    &= \frac{C}{\frac{k_{\min} - 1}{k_{\min}}}\\
    &= \frac{C k_{\min}}{k_{\min} - 1}
\end{align*}
and it now suffices to prove the following strict-decrease property.

\begin{thm} \label{thm1}
    If $P > \theta$, then $P' < P$.
\end{thm}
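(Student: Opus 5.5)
The plan is to combine the subproblem bound derived just above with the defining equation of $\theta$. From the derivation we already have, for every evaluation point $v_j$,
\[
P' \leq \frac{P}{k_{\min}} + C,
\]
where $C = \max\{C_x, C_y\} + 1$ does not depend on $x$ or $y$. So it suffices to show that $P > \theta$ forces $\frac{P}{k_{\min}} + C < P$, which is equivalent to $C < P\left(1 - \frac{1}{k_{\min}}\right)$.

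\textbf{Step 1 (the algebra).} Since $k_x, k_y \geq 2$, we have $k_{\min} \geq 2$. Hence $1 - 1/k_{\min} = (k_{\min}-1)/k_{\min} > 0$, and multiplying an inequality by this factor preserves strict order. From $P > \theta = \frac{C k_{\min}}{k_{\min}-1}$ we therefore get
\[
P \cdot \frac{k_{\min}-1}{k_{\min}} > C .
\]
Rearranging gives $P/k_{\min} + C < P$, and chaining with the bound above yields $P' < P$. I will carry this out over $\Q$ (or $\mathbb{R}$), casting the natural numbers $P$, $P'$, $C$, $k_{\min}$. This avoids any issue with integer division in $\theta$.

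\textbf{Step 2 (rigor of the upstream bound).} The real content is making the inequality $P' \leq \lceil P/k_{\min}\rceil + \max\{C_x,C_y\} \leq P/k_{\min} + C$ airtight, since Theorem \ref{thm1} inherits it. Three points need care.
\begin{itemize}
\item Every coefficient satisfies $p_\ell < B = b^i$, so $\delta(p_{\max}) \leq i$.
\item We need $i \leq \lceil P/k_{\min}\rceil$. This holds because $\lceil \delta(x)/k_x\rceil \leq \lceil P/k_{\min}\rceil$ by monotonicity of the ceiling, using $\delta(x) \leq P$ and $k_x \geq k_{\min}$, and similarly for $y$.
\item We need $\lceil t\rceil < t + 1$, which accounts for the $+1$ in $C$.
\end{itemize}
The inequality $\delta(st) \leq \delta(s)+\delta(t)$ and the triangle-inequality chain for $\delta(p(v_j))$ are taken as established in the derivation. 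The edge cases $p_{\max} = 0$ and $\delta(0)$ need a consistent convention, but they only make the bound easier.

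I expect the main obstacle to be this bookkeeping between natural-number digit counts, ceilings, and the rational quantity $\theta$, rather than the inequality itself. I would first state and prove the bound $P' < P/k_{\min} + C$ as a standalone lemma over $\Q$. Note that this inequality is strict, because $\lceil t\rceil < t+1$. The theorem then follows from that lemma together with the one-line algebra of Step 1.
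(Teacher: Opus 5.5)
Your proposal is correct and takes the same route as the paper: chain the upstream bound $P' \le P/k_{\min} + C$ with the algebraic consequence of $P > \theta = Ck_{\min}/(k_{\min}-1)$. Multiplying by $(k_{\min}-1)/k_{\min}$ is just a rearrangement of the paper's step $P/k_{\min} + C = (P-\theta)/k_{\min} + \theta < P$. Your extra remarks on the upstream bound, including the strict version via $\lceil t\rceil < t+1$, are accurate but go beyond what the theorem needs.
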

\begin{proof}
    We have
    \[
        \theta = \frac{Ck_{\min}}{k_{\min} - 1}
    \]
    and we showed earlier that
    \[
        P' \leq P/k_{\min} + C
    \]
    from which, recalling that $P > \theta$ and $k_{\min} \geq 2$, we obtain
    \begin{align*}
        P' &\leq \frac{P}{k_{\min}} + C\\
        &= \frac{P}{k_{\min}} + \theta\frac{k_{\min} - 1}{k_{\min}}\\
        &= \frac{P + \theta(k_{\min} - 1)}{k_{\min}}\\
        &= \frac{P - \theta}{k_{\min}} + \theta\\
        &< P - \theta + \theta \quad \text{as $P > \theta$ and $k_{\min} \geq 2$}\\
        &= P.
    \end{align*}
\end{proof}

\subsection{Extension to the infinity point} \label{sec:infty}
We now consider the problem of adapting our threshold derivation to a vector that includes the infinity point. This problem turns out to be solvable via a small update to our definition of $v_{\max}$. Note that our scope here is restricted to the $\theta$ derivation itself, i.e. we are still analyzing pointwise evaluation, not interpolation.

Consider $p \in \Z[u]$, the polynomial obtained from splitting $|x|$. The evaluation $p(\infty)$ is typically defined as the leading coefficient of $p$, and similarly for $q$. The precise definition---whether "leading" refers to the coefficient of $u^{k_x - 1}$ in a fixed-degree representation (which may be $0$) or whether it refers to the highest-degree nonzero coefficient---does not end up impacting the $\theta$ derivation. In fact, to prove Lemma \ref{lemma:infty}, a weaker condition suffices: we only need $p(\infty)$ to give \emph{any} single coefficient of $p$ for nonzero $p$, and $0$ if $p$ is the zero polynomial.

Now take $\vec v$, the vector of $k_x + k_y - 1$ distinct evaluation points. Suppose one of these points is $\infty$ and the rest are finite integers. We slightly amend our earlier definition of $v_{\max}$.
\[
    v_{\max} = \max_{\substack{v_j \in \vec v\\ v_j \neq \infty}} |v_j|.
\]

\begin{lemma} \label{lemma:infty}
    In the case with the infinity point included, $\delta(p(\infty)) \leq \delta(p(v_{\max}))$ and $\delta(q(\infty)) \leq \delta(q(v_{\max}))$.
\end{lemma}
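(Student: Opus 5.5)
The plan is to reduce the claim to a comparison of natural numbers and then use the monotonicity of $\delta$ over $\N$, which the derivation in Section \ref{sec:derivation} already relies on. As there, we work with the polynomial $p$ obtained by splitting $|x|$. Its coefficients $p_0,\dots,p_{k_x-1}$ are therefore non-negative integers, and $v_{\max}$ is a non-negative integer by definition. So both $p(\infty)$ and $p(v_{\max}) = \sum_{m=0}^{k_x-1} p_m v_{\max}^m$ lie in $\N$. It then suffices to show $p(\infty) \le p(v_{\max})$, since monotonicity of $\delta$ on $\N$ gives $\delta(p(\infty)) \le \delta(p(v_{\max}))$. The argument for $q$ is identical with $k_y$ in place of $k_x$.

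The first step, and the only real subtlety, is to show $v_{\max} \ge 1$. Without this the inequality can fail. If every finite point were $0$, then $p(v_{\max}) = p_0$, while $p(\infty)$ could be a higher coefficient $p_\ell$ with $p_\ell > p_0$. We rule this out by counting points. Since $k_x, k_y \ge 2$, the vector $\vec v$ has $k_x+k_y-1 \ge 3$ entries, of which exactly one is $\infty$. So there are at least two finite evaluation points. They are distinct, so at least one of them is nonzero, and hence $v_{\max} = \max_{v_j \neq \infty} |v_j| \ge 1$.

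With $v_{\max} \ge 1$ we have $v_{\max}^m \ge 1$ for every $m \ge 0$. Using the non-negativity of the coefficients, this gives
\[
p(v_{\max}) = \sum_{m=0}^{k_x-1} p_m v_{\max}^m \;\ge\; \sum_{m=0}^{k_x-1} p_m \;\ge\; p_\ell \quad \text{for every } \ell.
\]
Now we use the weak hypothesis stated before the lemma. If $p$ is nonzero, then $p(\infty) = p_\ell$ for some index $\ell$, and the display gives $p(\infty) \le p(v_{\max})$. If $p$ is the zero polynomial, then $p(\infty) = 0 \le p(v_{\max})$ trivially. In either case monotonicity of $\delta$ finishes the proof for $p$, and the same steps apply verbatim to $q$.

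I expect the only obstacle to be the degenerate case $v_{\max} = 0$, which is handled by the counting argument above. Everything else is a direct termwise comparison.
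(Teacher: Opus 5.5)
Your proof is correct and follows essentially the same route as the paper's. The same counting argument (at least three points, one of them $\infty$, so two distinct finite ones) gives $v_{\max}\ge 1$; then $p(\infty)$ is bounded by the coefficient sum $p(1)$, which is at most $p(v_{\max})$ by non-negativity of the coefficients, and monotonicity of $\delta$ on $\N$ finishes, with your explicit example of why $v_{\max}\ge 1$ is needed being a helpful addition.
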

\begin{proof}
    Recall that $k_x \geq 2$ and $k_y \geq 2$, or else we would be in a degenerate case. Therefore $\vec v$ contains at least 3 distinct points. One of these points is $\infty$, leaving at least two slots left. Since the points are distinct, there must exist at least one nonzero finite integer point in $\vec v$, as there cannot be more than one zero point. We see immediately that $v_{\max} \geq 1$, as $v_{\max}$ is the max over the \emph{absolute values} of finite integer points in $\vec v$.

    We may write $p(u)$ as
    \[
        p(u) = p_{(k_x - 1)} u^{(k_x - 1)} + \cdots + p_0 u^0
    \]
    and now recall a property of the evaluation at $\infty$: $p(\infty)$ gives one of the coefficients of $p$ (and $0$ if $p = 0$). Since we are dealing with the operands' absolute values here, all the coefficients are non-negative. So we have
    \[
        \delta(p(\infty)) \leq \delta(p_{(k_x - 1)} + \cdots + p_0) = \delta(p(1)) \leq \delta(p(v_{\max}))
    \]
    and we see that $\delta(p(\infty)) \leq \delta(p(v_{\max}))$. The same argument may be deployed symmetrically for $q$:
    \[
        \delta(q(\infty)) \leq 
        \delta(q_{(k_y - 1)} + \cdots + q_0) = \delta(q(1)) \leq \delta(q(v_{\max})).
    \]
\end{proof}

Thus, under our revised definition of $v_{\max}$, $\delta(p(v_{\max}))$ and $\delta(q(v_{\max}))$ still suffice as upper bounds on the digit counts of the polynomial evaluations, and the rest of the $\theta$ derivation goes through smoothly.

\section{Lean proof structure}
We provide an overview of the structure of the Lean proof. Note that our specification is an \emph{abstract mathematical description} of Toom-Cook. We do not aim to verify a particular implementation. Note also that our machine proof does not account for the infinity point.

The algorithm definition and high-level lemma goals were specified by us, and we used Aristotle's assistance to complete the proofs of those goals and of the intermediate results they entailed.
We also used GPT-5 to find lemmas in Mathlib and to help choose proof strategies for certain goals.
The proof comprises three key sections: the specification of the algorithm, the proof that the algorithm terminates, and the proof that the result returned by it is correct.

The proof is located in a GitHub repository \cite{repo} that also contains an Aristotle-assisted correctness proof of Karatsuba's algorithm. We have made substantial revisions to the proof since its first version; the major changes are recorded in the repository. The current version compiles under Lean v4.33.1 and Mathlib tag v4.33.1.

\subsection{Specification}
Below we give some of the key definitions we use in our specification. We represent split polynomials as coefficient vectors.

\begin{leancode}
def digits (b : ℕ) (x : ℤ) : ℕ :=
  if x = 0 then 1 else (Nat.log b (Int.natAbs x)) + 1
\end{leancode}
\begin{leancode}
def tk_P (BASE : ℕ) (x y : ℤ) : ℕ :=
  (digits BASE (max x.natAbs y.natAbs))

def tk_i (KX KY BASE : ℕ) (x y : ℤ) : ℕ :=
  (max
    (((digits BASE x) + KX - 1) / KX) -- ceiling-division trick
    (((digits BASE y) + KY - 1) / KY))

def tk_B (KX KY BASE : ℕ) (x y : ℤ) : ℕ :=
  BASE ^ (tk_i KX KY BASE x y)

def tk_split (KX KY KZ BASE : ℕ) (x y : ℕ) (z : ℕ) : Vector ℕ KZ :=
  Vector.ofFn (fun i : Fin KZ =>
    (z / (tk_B KX KY BASE x y) ^ i.1) % (tk_B KX KY BASE x y))
\end{leancode}

We explicitly define the formula for $\theta(k_x, k_y, \vec v)$ in the Lean specification.
We build up to this by defining the formulas for $C_x$, $C_y$, and $C$ per the definitions in Section \ref{sec:derivation}. We then define $\theta$ in terms of these values.
\begin{leancode}
def tk_vmax
  (KX KY : ℕ) (POINTS : Vector ℤ (KX+KY-1)) : ℕ :=
  POINTS.toList.foldl (fun acc z => max acc z.natAbs) 0

def tk_CX
  (BASE KX KY : ℕ)
  (POINTS : Vector ℤ (KX+KY-1))
: ℕ :=
  let v_max := tk_vmax KX KY POINTS
  digits BASE ((∑ i ∈ Finset.range KX, v_max ^ i) : ℤ)

def tk_CY
  (BASE KX KY : ℕ)
  (POINTS : Vector ℤ (KX+KY-1))
: ℕ :=
  let v_max := tk_vmax KX KY POINTS
  digits BASE ((∑ i ∈ Finset.range KY, v_max ^ i) : ℤ)

def tk_C
  (BASE KX KY : ℕ)
  (POINTS : Vector ℤ (KX+KY-1))
: ℕ :=
  (max (tk_CX BASE KX KY POINTS) (tk_CY BASE KX KY POINTS)) + 1

def tk_THETA
  (BASE KX KY : ℕ)
  (POINTS : Vector ℤ (KX+KY-1))
: ℚ :=
  let C := ((tk_C BASE KX KY POINTS) : ℚ)
  let K_MIN := ((min KX KY) : ℚ)
  (C * K_MIN) / (K_MIN - (1 : ℚ))
\end{leancode}

\subsubsection{Algorithm}
Our Lean implementation of the algorithm is a direct rendering of the logic given in Section \ref{sec:algorithm}.
The only significant additions here are as follows: we make it explicit that $\theta(k_x, k_y, \vec v)$ is the base-case threshold; and since we are now dealing with operands in $\Z$, we multiply their absolute values and then determine the sign of the result.

\begin{leancode}
noncomputable def toomk
(BASE KX KY : ℕ)
(h_base : 1 < BASE)
(h_k : 1 < KX ∧ 1 < KY)
(x y : ℤ)
(POINTS : Vector ℤ (KX + KY - 1))
(h_inj : Function.Injective POINTS.get)
: ℚ :=
  if h : (tk_P BASE x y ≤ (tk_THETA BASE KX KY POINTS)) then
    x * y
  else

    let x_abs : ℕ := (x.natAbs)
    let y_abs : ℕ := (y.natAbs)

    let B := tk_B KX KY BASE x_abs y_abs
    let pcoeffs := tk_split KX KY KX BASE x_abs y_abs x_abs
    let qcoeffs := tk_split KX KY KY BASE x_abs y_abs y_abs

    let p_vec := fun i => eval_vec_as_poly pcoeffs (POINTS.get i)
    let q_vec := fun i => eval_vec_as_poly qcoeffs (POINTS.get i)

    let evaluated_pq : Fin (KX + KY - 1) → ℚ :=
      fun i =>
        let a := p_vec i
        let b := q_vec i
        let absprod := toomk BASE KX KY h_base h_k (a.natAbs : ℤ) (b.natAbs : ℤ) POINTS h_inj
        if (Xor (a < 0) (b < 0)) then -absprod else absprod

    let POINTS_Q := fun i => (POINTS.get i : ℚ)
    let r := interpolate KX KY POINTS_Q evaluated_pq
    let absprod := eval_vec_as_poly_Q r B

    if (Xor (x < 0) (y < 0)) then -absprod else absprod
termination_by
  tk_P BASE x y
decreasing_by
  ...
\end{leancode}
We use the notation in Section \ref{sec:algorithm} to explain the implementation here. First, we check if the problem size $P_{x, y}$ is $\leq \theta(k_x, k_y, \vec v)$. If so, we switch to default multiplication. If not, we enter the recursive case.

In the recursive case, we calculate $B$ using {\tt\scriptsize tk\_B}. {\tt\scriptsize x\_abs} and {\tt\scriptsize y\_abs} contain the values of $|x|$ and $|y|$ respectively; it is these values on which we perform the splitting step. {\tt\scriptsize pcoeffs} and {\tt\scriptsize qcoeffs} are coefficient-vector representations of the split polynomials $p$ and $q$ respectively. The vector $\vec w$, which contains the pointwise products of $p(v_j) \cdot q(v_j)$ for each entry $v_j$ of $\vec v$, is denoted by {\tt\scriptsize evaluated\_pq}.

The function {\tt\scriptsize interpolate} performs the Vandermonde interpolation step and returns the result: the interpolated coefficient vector $\vec r$. Evaluating the polynomial represented by $\vec r$ at $B$ gives us the solution, to which we then assign the correct sign.

In keeping with the generality of the specification, we perform the interpolation over $\mathbb{Q}$. The function thus returns a value in $\mathbb{Q}$, and our correctness theorem shows that this value is the same as the integer value $x \cdot y \in \mathbb{Z}$, cast to $\mathbb{Q}$.

\subsection{Termination}
Whenever we specify a recursive function in Lean, Lean will attempt to auto-verify that the function terminates.
If the termination proof is nontrivial, as it is here, then the auto-verification will not suffice and we must construct the proof ourselves.

The task of showing termination depends on a proof that the measure $P$ is decreasing above $\theta$.
We prove the following lemmas with the assistance of Aristotle.
These lemmas include some conversions of $x, y$ to their absolute values. This is because, although the main {\tt\scriptsize toomk} function takes integer operands, it multiplies their absolute values and only assigns the answer its correct sign at the end.
A previous version of the proof used assistance from both Aristotle and GPT-5 to prove these lemmas; as part of the revision, we have re-proved them using Aristotle alone, with GPT's contributions restricted to formalization assistance.

\begin{lemma}\label{l2}
For each entry $v_j$ of $\vec v$, $\delta(p(v_j)) \leq P / k_{\min} + C$.
\end{lemma}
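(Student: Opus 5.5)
The plan is to follow the derivation in Section~\ref{sec:derivation}, split into three separately provable inequalities that are then chained. Here $p$ is the polynomial whose coefficient vector is {\tt\scriptsize tk\_split} applied to $|x|$, and $v_j$ is any entry of $\vec v$.

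\textbf{Step 1 (evaluation bound).} I will show $|p(v_j)| \le p_{\max}\sum_{m=0}^{k_x-1} v_{\max}^m$. The argument runs as follows:
\begin{itemize}
\item apply the triangle inequality termwise to the finite sum $\sum_m p_m v_j^m$;
\item use nonnegativity of the coefficients to replace $|p_m v_j^m|$ by $p_m |v_j|^m$;
\item use monotonicity of powers together with $|v_j|\le v_{\max}$ (which requires a small lemma that the foldl defining {\tt\scriptsize tk\_vmax} dominates every $|v_j|$);
\item bound each $p_m$ by $p_{\max}$.
\end{itemize}
Since $\delta$ depends only on absolute value and is monotone on $\N$, this gives $\delta(p(v_j)) \le \delta\bigl(p_{\max}\cdot S_x\bigr)$, where $S_x=\sum_{m<k_x} v_{\max}^m$.

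\textbf{Step 2 (digit-count arithmetic).} I will prove the general fact $\delta(st)\le\delta(s)+\delta(t)$ for $s,t\in\N$. The cases $s=0$ or $t=0$ are immediate, since $\delta(0)=1$. Otherwise $s<b^{\delta(s)}$ and $t<b^{\delta(t)}$, so $st<b^{\delta(s)+\delta(t)}$, which by the characterization of Nat.log gives the claim. Applying this to Step 1 yields
\[
\delta(p(v_j))\le \delta(p_{\max})+C_x .
\]

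\textbf{Step 3 (coefficient size).} Each coefficient is a residue mod $B=b^i$, so $p_{\max}<b^i$ and hence $\delta(p_{\max})\le i$ (using $i\ge1$, or $p_{\max}=0$, to handle $\delta(0)=1$). The choice $i=\max\{\lceil\delta(x)/k_x\rceil,\lceil\delta(y)/k_y\rceil\}$ gives $i\le\lceil P/k_{\min}\rceil$, because $\delta(|x|),\delta(|y|)\le P$ by monotonicity and the ceiling is monotone in both numerator and denominator. Finally $\lceil P/k_{\min}\rceil\le P/k_{\min}+1$ over $\Q$, and $C_x+1\le C$. Combining the three steps gives $\delta(p(v_j))\le P/k_{\min}+C$ as a rational inequality.

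The main obstacle I expect is the bookkeeping in Step 3, not the mathematics. The ceiling is computed via natural-number division, $(\delta+k-1)/k$, and it must be compared to the rational $P/k_{\min}$. The edge case $\delta(0)=1$ also needs care. For the ceiling, I would first prove in $\N$ that $k_{\min}\cdot i\le P+k_{\min}-1$, using Nat.div\_mul\_le and monotonicity in the divisor. I would then cast to $\Q$ and divide by $k_{\min}>0$. Step 1 is routine in Lean with Finset.abs\_sum\_le\_sum\_abs and Finset.sum\_le\_sum.
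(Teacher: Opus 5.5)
Your proposal is correct and follows the paper's derivation in Section~\ref{sec:derivation} essentially step for step: the triangle-inequality and nonnegativity bound $|p(v_j)|\le p_{\max}\sum_{m<k_x}v_{\max}^m$, then $\delta(st)\le\delta(s)+\delta(t)$ to split off $C_x$, then $\delta(p_{\max})\le i\le\lceil P/k_{\min}\rceil\le P/k_{\min}+1$, with the extra $1$ absorbed by the $+1$ in $C$. You also spell out the $\delta(0)=1$ edge case via $i\ge 1$, which always holds because $\delta\ge 1$; the paper leaves this implicit, but it is needed for the step $\delta(p_{\max})\le i$.
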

\begin{leancode}
lemma eval_bound_x
  (KX KY BASE : ℕ)
  (h_base : 1 < BASE)
  (hk : 1 < KX ∧ 1 < KY)
  (x y : ℕ)
  (POINTS : Vector ℤ (KX+KY-1))
: ∀ p ∈ POINTS, ((digits BASE (eval_vec_as_poly (tk_split KX KY KX BASE x y x) p) : ℚ))
≤ ((tk_P BASE x y : ℚ) / (min KX KY : ℚ)) + (tk_C BASE KX KY POINTS : ℚ)
\end{leancode}

\begin{lemma} \label{l3}
For each entry $v_j$ of $\vec v$, $\delta(q(v_j)) \leq P / k_{\min} + C$.
\end{lemma}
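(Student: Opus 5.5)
The plan is to mirror the proof of Lemma \ref{l2} with the roles of $x$ and $y$ (and of $k_x$ and $k_y$) exchanged. The argument runs through the same chain of inequalities as in Section \ref{sec:derivation}, now applied to $q$.

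First, bound the evaluation. We work with $|y|$, so every coefficient $q_\ell = \lfloor |y|/B^\ell\rfloor \bmod B$ is a natural number with $q_\ell < B = b^i$. By the triangle inequality,
\[
|q(v_j)| \le \sum_{m=0}^{k_y-1} q_m |v_j|^m \le q_{\max}\sum_{m=0}^{k_y-1} v_{\max}^m .
\]
This uses $|v_j| \le v_{\max}$, which needs a small lemma that the left fold defining $v_{\max}$ dominates every entry's absolute value. By monotonicity of $\delta$ on $\N$ and the submultiplicativity fact $\delta(st)\le\delta(s)+\delta(t)$, we get $\delta(q(v_j)) \le \delta(q_{\max}) + C_y$.

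Second, bound the coefficient digits. Since $q_{\max} < b^i$, we have $\delta(q_{\max}) \le i$; when $q_{\max}=0$ we have $\delta=1\le i$, using $i\ge 1$ because $\delta(\cdot)\ge 1$. Next, $i = \max\{\lceil \delta(x)/k_x\rceil, \lceil\delta(y)/k_y\rceil\}$. Each of $\delta(x),\delta(y)$ is at most $P$, and dividing by $k_x$ or $k_y$ is at most dividing by $k_{\min}$. Hence $i \le \lceil P/k_{\min}\rceil \le P/k_{\min} + 1$ over $\Q$. Since $C = \max\{C_x,C_y\}+1 \ge C_y + 1$, combining the two steps gives $\delta(q(v_j)) \le P/k_{\min} + C$.

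The main obstacle I expect is the bookkeeping in the coefficient-digit step. The ceiling is encoded as $(\delta + k - 1)/k$ in natural-number division. This must be bounded by $P/k_{\min}+1$ after casting to $\Q$, which needs $\delta(y)\le P$ (monotonicity of $\delta$ in $\max$) and a case split on which term realizes $\min\{k_x,k_y\}$. I would try to factor this out as a shared lemma $i \le P/k_{\min}+1$ used by both Lemma \ref{l2} and this one. The remaining steps are then symmetric copies of the proof for $p$. Alternatively, the whole statement could be derived from Lemma \ref{l2} by a symmetry argument swapping the roles of the operands, but the definitions of $C$ and $\theta$ are symmetric while the split is indexed asymmetrically, so direct reproof is likely simpler.
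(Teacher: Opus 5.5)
Your proposal is correct and follows the paper's own derivation in Section~\ref{sec:derivation}, applied to $q$. That derivation bounds $|q(v_j)|$ by $q_{\max}\sum_{m} v_{\max}^m$, uses $\delta(st)\le\delta(s)+\delta(t)$ to get $\delta(q_{\max})+C_y$, and then uses $\delta(q_{\max})\le i\le\lceil P/k_{\min}\rceil$ with the ceiling slack absorbed by the $+1$ in $C$; your extra care about $q_{\max}=0$, the fold lemma for $v_{\max}$, and the natural-number ceiling encoding only supplies formalization details the paper leaves implicit.
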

\begin{leancode}
lemma eval_bound_y
  (KX KY BASE : ℕ)
  (h_base : 1 < BASE)
  (hk : 1 < KX ∧ 1 < KY)
  (x y : ℕ)
  (POINTS : Vector ℤ (KX+KY-1))
: ∀ p ∈ POINTS, ((digits BASE (eval_vec_as_poly (tk_split KX KY KY BASE x y y) p) : ℚ))
≤ ((tk_P BASE x y : ℚ) / (min KX KY : ℚ)) + (tk_C BASE KX KY POINTS : ℚ)
\end{leancode}

We conclude by formalizing and proving Theorem \ref{thm1} in Lean. Because of the upper bound on $P'$ established above, it is easy to show that the problem size strictly decreases above $\theta$.

\begin{lemma} \label{l4}
$P > \theta \implies P/k_{\min} + C < P$.
\end{lemma}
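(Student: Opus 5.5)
This is a pure arithmetic statement over $\Q$, so I will prove it directly from the definition $\theta = Ck_{\min}/(k_{\min}-1)$, as in the proof of Theorem \ref{thm1} but without any reference to subproblems. Write $k = k_{\min}$. From the hypothesis $1 < k_x$ and $1 < k_y$ I get $k \geq 2$, so $k - 1 > 0$ and $k > 0$ in $\Q$. The first step is to record these positivity facts as rational inequalities. The cast of $\min\{k_x,k_y\}$ to $\Q$ must be identified with the minimum of the casts, which is a routine cast lemma.

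Next I clear denominators. Multiplying $\theta = Ck/(k-1)$ by the positive quantity $k-1$ gives $C k = \theta(k-1)$, so $C = \theta(k-1)/k$. The target inequality $P/k + C < P$ then becomes, after multiplying through by $k > 0$,
\[
P + \theta(k-1) < kP,
\]
which is equivalent to $(k-1)\theta < (k-1)P$.

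This last inequality follows from $P > \theta$ and $k - 1 > 0$ by strict monotonicity of multiplication by a positive rational. In Lean I would phrase it as: rewrite the goal using \texttt{div\_add\_div} / \texttt{lt\_div\_iff} (or simply \texttt{rw [div\_lt\_iff]} after moving $C$ across), substitute $\theta$, and close with \texttt{nlinarith} or \texttt{linarith} supplied with the product fact \texttt{mul\_lt\_mul\_of\_pos\_left h (sub\_pos.mpr hk)}.

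The only real obstacle is bookkeeping rather than mathematics. One issue is unfolding \texttt{tk\_THETA} (with its \texttt{let} bindings) so that the division by $k-1$ is exposed. The other is justifying $k - 1 \neq 0$ in $\Q$ from the natural-number hypotheses. I would handle it first by proving an auxiliary fact $(2:\Q) \leq (\min\{k_x,k_y\}:\Q)$ via \texttt{Nat.cast\_le} and \texttt{le\_min}. With that available, \texttt{field\_simp} clears the denominators and \texttt{nlinarith} finishes.
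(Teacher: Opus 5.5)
Your proof is correct and is essentially the paper's argument from Theorem~\ref{thm1}. The paper substitutes $C = \theta(k_{\min}-1)/k_{\min}$ and uses $(P-\theta)/k_{\min} < P-\theta$, while you clear the denominator and reduce to $(k_{\min}-1)\theta < (k_{\min}-1)P$; both are the same fact, $(k_{\min}-1)(P-\theta) > 0$, just rearranged.
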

\begin{leancode}
lemma decrease_property_aux
  (KX KY BASE : ℕ)
  (h_base : 1 < BASE)
  (hk : 1 < KX ∧ 1 < KY)
  (x y : ℕ)
  (POINTS : Vector ℤ (KX+KY-1))
  (h_THETA : (tk_P BASE x y) > (tk_THETA BASE KX KY POINTS))
: ((tk_P BASE x y : ℚ) / (min KX KY : ℚ)) + (tk_C BASE KX KY POINTS : ℚ) < (tk_P BASE x y : ℚ)
\end{leancode}

\begin{lemma} \label{l5}
$P > \theta \implies P' < P$.
\end{lemma}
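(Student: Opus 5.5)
The plan is to assemble Lemma \ref{l5} directly from Lemmas \ref{l2}, \ref{l3} and \ref{l4}, exactly mirroring the paper-level proof of Theorem \ref{thm1}. Recall that the subproblem size is $P' = \delta(\max\{|p(v_j)|, |q(v_j)|\})$, which in the Lean setting is {\tt\scriptsize tk\_P} applied to the recursive operands $|p(v_j)|$ and $|q(v_j)|$. The first step is to rewrite $P'$ as $\max\{\delta(p(v_j)), \delta(q(v_j))\}$. This uses monotonicity of $\delta$ on $\N$ (the definition via {\tt\scriptsize Nat.log} is monotone, and the special case at $0$ agrees with the value at $1$) together with $\delta(|s|) = \delta(s)$. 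Under the {\tt\scriptsize max}, it suffices to bound each of the two digit counts separately.

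The second step applies Lemma \ref{l2} to $\delta(p(v_j))$ and Lemma \ref{l3} to $\delta(q(v_j))$. Here $v_j$ is given as {\tt\scriptsize POINTS.get i}, so membership $v_j \in \vec v$ must be supplied via {\tt\scriptsize Vector.get\_mem} or a similar lemma. The lemmas are stated for natural operands, so we instantiate them at $|x|, |y|$. We also note that {\tt\scriptsize tk\_P} is unchanged by taking absolute values, since it is defined through {\tt\scriptsize natAbs}. Together these give, over $\Q$,
\[ P' \le \frac{P}{k_{\min}} + C. \]

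The third step invokes Lemma \ref{l4}. Its hypothesis $P > \theta$ is exactly the negated guard of the {\tt\scriptsize if} in {\tt\scriptsize toomk}, again transported through $|x|, |y|$. Chaining the inequalities gives $(P' : \Q) < (P : \Q)$, and casting back via {\tt\scriptsize Nat.cast\_lt} yields $P' < P$ in $\N$, which is the form {\tt\scriptsize decreasing\_by} needs.

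I expect the main obstacle to be bookkeeping rather than mathematics. The hard part is matching the recursive call's arguments $((a.\mathrm{natAbs} : \Z), (b.\mathrm{natAbs} : \Z))$ to the expressions appearing in Lemmas \ref{l2} and \ref{l3}. In particular, the {\tt\scriptsize tk\_split} there uses $B$ computed from $x, y$ as naturals, whereas {\tt\scriptsize toomk} computes it from {\tt\scriptsize x\_abs}, {\tt\scriptsize y\_abs}, and digits of casts must be reconciled. My first approach would be a small helper lemma, $\mathrm{digits}(\max(|a|,|b|)) = \max(\mathrm{digits}\ a, \mathrm{digits}\ b)$, proved by cases on $|a| \le |b|$ using monotonicity of {\tt\scriptsize Nat.log}. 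After that, I would close the goal with {\tt\scriptsize simp} unfolding and {\tt\scriptsize linarith}.
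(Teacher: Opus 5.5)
Your proposal is correct and follows the paper's route. The paper likewise obtains $P' \le P/k_{\min} + C$ from Lemmas \ref{l2}--\ref{l3} and then concludes $P' < P$ via the algebra of Theorem \ref{thm1}, packaged as Lemma \ref{l4}; the $B$-mismatch you anticipate does not arise, because instantiating Lemmas \ref{l2}--\ref{l3} at $|x|, |y|$ yields exactly the {\tt\scriptsize tk\_split} terms in Lemma \ref{l5}, and {\tt\scriptsize tk\_P} is invariant under taking absolute values.
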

\begin{leancode}
lemma decrease_property
  (KX KY BASE : ℕ)
  (h_base : 1 < BASE)
  (hk : 1 < KX ∧ 1 < KY)
  (x y : ℤ)
  (POINTS : Vector ℤ (KX+KY-1))
  (i : Fin (KX+KY - 1))
  (h_THETA : (tk_P BASE x y) > (tk_THETA BASE KX KY POINTS))
: tk_P BASE
    |(eval_vec_as_poly (tk_split KX KY KX BASE x.natAbs y.natAbs x.natAbs) (POINTS.get i))|
    |(eval_vec_as_poly (tk_split KX KY KY BASE x.natAbs y.natAbs y.natAbs) (POINTS.get i))|
  < tk_P BASE x y
\end{leancode}

Lemmas \ref{l4}-\ref{l5} constitute a restatement of Theorem \ref{thm1}. The underlying idea of Lemmas \ref{l2}-\ref{l3} is that no sub-multiplication will have a problem size in excess of $P / k_{\min} + C$. The problem size of a sub-multiplication is the quantity represented by $P'$. In this manner, we logically progress towards Lemma \ref{l5}.

\subsection{Correctness}
The statement of the correctness theorem is simple: under the classical Toom-Cook assumptions, the return value of the algorithm is the product $x \cdot y$, cast to $\Q$.

\begin{thm}
Given $b, k_x, k_y, x, y, \vec v$ such that $b \in \mathbb{N}$, $1 < b$, $k_x, k_y \in \mathbb{N}, 1 < k_x, 1 < k_y$, $x, y \in \mathbb{Z}$, $\vec v \in \mathbb{Z}^{k_x+k_y-1}$, and $\vec v$ contains distinct points:\\
\textsc{ToomK}$(b, k_x, k_y, x, y, \vec v) = (x \cdot y)$, cast to $\Q$.
\end{thm}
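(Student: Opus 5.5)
We will prove the theorem by strong induction on the measure $P_{x,y}$, which is the same measure used in the \texttt{termination\_by} clause. Concretely, we prove by induction on $n$ that for all $x,y \in \Z$ with $P_{x,y} \le n$, \textsc{ToomK}$(b,k_x,k_y,x,y,\vec v) = x\cdot y$ in $\Q$. We then unfold the definition once and split on the guard $P \le \theta$. In the base case the function returns $x \cdot y$ directly, so the claim is immediate.

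In the recursive case $P > \theta$, we first handle the recursive calls. For each index $j$, let $a_j = p(v_j)$ and $b_j = q(v_j)$. The recursive call is made on $(|a_j|,|b_j|)$, and Lemma~\ref{l5} gives $P_{|a_j|,|b_j|} < P$. By the induction hypothesis, the call therefore returns $|a_j||b_j|$. The sign fix-up then gives $\vec w_j = a_j b_j$; this is a four-way case split on the signs of $a_j,b_j$, using $|a||b| = \pm ab$. Hence $\vec w_j = (p\cdot q)(v_j)$.

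The main algebraic step is interpolation. We need that \texttt{interpolate} applied to the points $\vec v$ and values $w_j = h(v_j)$ returns the coefficient vector of $h = p\cdot q$, viewed in $\Q[u]$. Since $\deg h \le (k_x-1)+(k_y-1) = k_x+k_y-2$, which is less than the number of distinct points, interpolation recovers $h$ uniquely. Two routes are available, depending on how \texttt{interpolate} is defined:
\begin{itemize}
\item If it is defined as $V^{-1}\vec w$, we show $V\cdot \mathrm{coeffs}(h) = \vec w$. The Vandermonde determinant is nonzero by injectivity of $\vec v$ (Mathlib's \texttt{Matrix.det\_vandermonde\_ne\_zero\_iff}), so $V$ is invertible and $V^{-1}\vec w = \mathrm{coeffs}(h)$.
\item If it goes through Lagrange interpolation, we use Mathlib's \texttt{Lagrange.eq\_interpolate\_of\_eval\_eq} with the degree bound.
\end{itemize}
Either way, we must compute the coefficients of $h$ from the coefficient vectors $\mathrm{pcoeffs}$ and $\mathrm{qcoeffs}$ as a convolution, and show that \texttt{eval\_vec\_as\_poly} agrees with \texttt{Polynomial.eval} on the corresponding polynomial. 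We expect this bookkeeping step to be the main obstacle. It involves casts between $\N$, $\Z$ and $\Q$, \texttt{Vector}/\texttt{Fin} indexing against \texttt{Polynomial.coeff}, and the index bound $k_x+k_y-1$ (a natural-number subtraction that needs $k_x,k_y\ge 1$). We would first prove a bridging lemma: for a vector $c$ of length $n$, $\texttt{eval\_vec\_as\_poly}\ c\ t = \mathrm{eval}_t(\sum_{m<n} c_m X^m)$. All evaluation facts would then be transported through that polynomial.

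Finally, evaluating $\mathrm{coeffs}(h)$ at $B$ gives $h(B) = p(B)\,q(B)$. It remains to show $p(B) = |x|$ and $q(B) = |y|$, i.e., that the base-$B$ splitting is exact. This holds because $\sum_{m<k_x} (\lfloor |x|/B^m\rfloor \bmod B)B^m = |x| \bmod B^{k_x}$, which equals $|x|$ since $|x| < B^{k_x}$. The bound $|x| < B^{k_x}$ follows from $\delta(|x|) \le k_x i$, since $i \ge \lceil \delta(|x|)/k_x\rceil$, together with $|x| < b^{\delta(|x|)}$ (\texttt{Nat.lt\_pow\_succ\_log\_self}). The identity itself we prove by induction on $k_x$ via \texttt{Nat.div\_add\_mod} and \texttt{Nat.mod\_pow\_succ}. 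The last sign case split, $|x||y| = \pm xy$ according to \texttt{Xor}, concludes the proof.
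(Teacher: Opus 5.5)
Your proposal is correct and takes essentially the same route as the paper's Aristotle-generated proof: strong induction on the problem size $P_{x,y}$, using the decrease property (Lemma~\ref{l5}) to justify the recursive calls, with correctness of the Vandermonde interpolation as the key algebraic step. The extra detail you supply fills in steps the paper leaves implicit rather than changing the approach: the sign fix-ups, the bridge between \texttt{eval\_vec\_as\_poly} and polynomial evaluation, and exactness of the base-$B$ split via $|x| < B^{k_x}$.
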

\begin{leancode}
theorem toomk_correctness
(BASE KX KY : ℕ)
(h_base : 1 < BASE)
(h_k : 1 < KX ∧ 1 < KY)
(x y : ℤ)
(POINTS : Vector ℤ (KX + KY - 1))
(h_inj : Function.Injective POINTS.get)
: (toomk BASE KX KY h_base h_k x y POINTS h_inj) = (x * y : ℚ)
\end{leancode}
We provided only the goal header of this theorem to Aristotle. Aristotle was able to reason through the entire proof by itself. It took a fairly straightforward route: a proof of the correctness of the Vandermonde interpolation procedure, followed by strong induction on the problem size.

\section{Conclusion}
We have presented a machine-verified correctness proof of the Toom-Cook multiplication algorithm given arbitrary distinct integer evaluation points. The proof requires a formula for a base-case threshold that ensures the algorithm's termination for any integer values of the operands $x, y$. By analyzing the behavior of the recurrence's subproblem sizes over time, we derive a base case threshold formula $\theta(k_x, k_y, \vec v)$ that suffices to ensure termination. We machine-verify that this $\theta$---which can be expressed entirely in terms of our working base $b$, the point vector $\vec v$, and the splitting parameters $k_x$ and $k_y$---is a valid threshold.
Our proof, which uses the assistance of Aristotle, demonstrates the capability of AI to automate the most time-intensive parts of the machine verification process.

\section{Acknowledgements}
Claude Fable 5.1 and GPT-6 Astra assisted in proofreading the paper's current version. We would also like to thank Evan Lucca, Joseph Donato, and Karun Ram for their ideas and feedback.

\printbibliography

\end{document}